\documentclass[aps,%
twocolumn,%
floatfix,%
reprint,%
superscriptaddress,%
longbibliography]{revtex4-1}%
\usepackage[utf8]{inputenc}
\usepackage{mathrsfs}       
\usepackage{mathtools}
\usepackage{amsmath}
\usepackage{amsthm}         
\usepackage{amsfonts}       
\usepackage[titletoc,title]{appendix}
\usepackage{algorithm,algorithmic}
\usepackage{bbm}            
\usepackage{bm}             
\usepackage{times}

\usepackage[colorlinks=true,%
      linkcolor=beamer@blendedblue,%
      bookmarks=true,%
      breaklinks=true,%
      filecolor=beamer@blendedblue,%
      anchorcolor=yellow,%
      citecolor=beamer@blendedblue,%
      urlcolor=beamer@blendedblue,%
      pdfauthor={Kun Wang, Yu-Ao Chen, and Xin Wang},%
      pdfsubject={Measurement Error Mitigation via Truncated Neumann Series},%
      CJKbookmarks=true]{hyperref}

\usepackage[dvipsnames]{xcolor}
\definecolor{beamer@blendedblue}{rgb}{0.2,0.2,0.7}

\newtheorem{definition}{Definition}
\newtheorem{proposition}[definition]{Proposition}
\newtheorem{lemma}[definition]{Lemma}
\newtheorem{theorem}[definition]{Theorem}

\mathchardef\ordinarycolon\mathcode`\:
\mathcode`\:=\string"8000
\def\vcentcolon{\mathrel{\mathop\ordinarycolon}}
\begingroup \catcode`\:=\active
  \lowercase{\endgroup
  \let :\vcentcolon
  }

\DeclareFontFamily{U}{mathx}{\hyphenchar\font45}
\DeclareFontShape{U}{mathx}{m}{n}{<-> mathx10}{}
\DeclareSymbolFont{mathx}{U}{mathx}{m}{n}
\DeclareMathAccent{\widebar}{0}{mathx}{"73}

\newcommand{\ket}[1]{\left\vert{#1}\right\rangle}
\newcommand{\bra}[1]{\left\langle{#1}\right\vert}

\newcommand\proj[1]{\vert{#1}\rangle\!\langle{#1}\vert}
\newcommand{\opn}[1]{\operatorname{#1}}
\DeclareMathOperator{\tr}{Tr}  
\newcommand{\1}{\mathbbm{1}}

\newcommand{\norm}[2]{\ensuremath{\left\lVert#1\right\rVert_{#2}}}%

\newcommand*{\cO}{\mathcal{O}}

\newcommand{\bE}{\mathbb{E}}


\begin{document}

\title{\Large\textbf{Measurement Error Mitigation via Truncated Neumann Series}}

\author{Kun Wang}
\email{wangkun28@baidu.com}
\affiliation{Institute for Quantum Computing, Baidu Research, Beijing 100193, China}

\author{Yu-Ao Chen}
\email{chenyuao@baidu.com}
\affiliation{Institute for Quantum Computing, Baidu Research, Beijing 100193, China}

\author{Xin Wang}
\email{wangxin73@baidu.com}
\affiliation{Institute for Quantum Computing, Baidu Research, Beijing 100193, China}

\begin{abstract}
Measurements on near-term quantum processors are inevitably subject to hardware imperfections that lead to readout errors. Mitigation of such unavoidable errors is crucial to better explore and extend the power of near-term quantum hardware. In this work, we propose a method to mitigate measurement errors in computing quantum expectation values using the truncated Neumann series. The essential idea is to cancel the errors by combining various noisy expectation values generated by sequential measurements determined by terms in the truncated series.
We numerically test this method and find that the computation accuracy is substantially improved. Our method possesses several advantages:
it does not assume any noise structure, it does not require the calibration procedure to learn the noise matrix {a prior}, and most importantly, the incurred error mitigation overhead is independent of system size, as long as the noise resistance of the measurement device is moderate. All these advantages empower our method as a practical measurement error mitigation method for near-term quantum devices.
\end{abstract}

\date{\today}
\maketitle

\section{Introduction}

Quantum computers hold great promise for a variety of
scientific and industrial applications~\cite{mcardle2020quantum,cerezo2020variational,bharti2021noisy,endo2021hybrid}. However, in the current stage
noisy intermediate-scale quantum (NISQ) computers~\cite{preskill2018quantum}
introduce significant errors that must be dealt with before
performing any practically valuable tasks.
Errors in a quantum computer are typically classified into quantum gate errors
and measurement errors.
For quantum gate errors, various quantum error mitigation techniques
have been proposed to mitigate the damages caused by errors
on near-term quantum devices~\cite{temme2017error,endo2018practical,li2017efficient,mcclean2017hybrid,McClean2020Decoding,mcardle2019error,bonet2018low,he2020resource,giurgica2020digital,kandala2019error,endo2021hybrid,sun2021mitigating,czarnik2020error}.
For measurement errors, experimental works have demonstrated that
measurement errors in quantum devices can be well understood in terms of
classical noise models~\cite{chow2012universal,kandala2019error,chen2019detector},
which is recently rigorously justified~\cite{geller2020rigorous}.
Specifically, a $n$-qubit noisy measurement device
can be characterized by a noise matrix $A$ of size $2^n\times 2^n$.
The element in the $\bm{x}$-th row and $\bm{y}$-th column,  $A_{\bm{x}\bm{y}}$,
is the probability of obtaining a outcome $\bm{x}$ provided that the true outcome is $\bm{y}$.
If one has access to this stochastic matrix, it is
straightforward to classically reverse the noise effects
simply by multiplying the probability vector obtained from experimental statistics
by this matrix's inversion.
However, there are several limitations of this matrix inversion approach:
(i) The complete characterization of $A$ requires $2^n$ calibration experiment setups and thus is not scalable.
(ii) The matrix $A$ may be singular for large $n$, preventing direct inversion. (iii) The
inverse $A^{-1}$ is hard to compute and might not be a stochastic matrix, indicating
that it can produce negative probabilities.

Several approaches have been proposed to deal with these
issues~\cite{maciejewski2020mitigation,tannu2019mitigating,nachman2019unfolding,hicks2021readout,bravyi2020mitigating,geller2020efficient,murali2020software,kwon2020hybrid,funcke2020measurement,zheng2020bayesian,maciejewski2021modeling,barron2020measurement}.
For example, Ref.~\cite{chen2019detector,maciejewski2020mitigation} elucidated that
the quality of the measurement calibration and the number of measurement samples
affected the performance of measurement error mitigation methods dramatically.
Motivated by the unfolding algorithms in high energy physic,
Ref.~\cite{nachman2019unfolding,hicks2021readout} used the
iterative Bayesian unfolding approach to avoid pathologies from the matrix inversion.
Ref.~\cite{bravyi2020mitigating} introduced a new classical noise model based on the
continuous time Markov processes and proposed an error mitigation approach
that cancels errors using the quasiprobability decomposition technique~\cite{pashayan2015estimating,temme2017error,howard2017application,endo2018practical,takagi2020optimal,jiang2020physical,regula2021operational}.
However, most of these works make an explicit assumption on the physical noise model
and require the calibration procedure to learn the stochastic matrix $A$,
and thus is not scalable in general.
Recently, Ref.~\cite{berg2020model} proposed a noise model-free measurement error mitigation method that
forces the bias in the expectation value to appear as a multiplicative factor that can be removed.

In this work, we propose a measurement error mitigation method motivated by the Neumann series,
applicable for any quantum algorithms where the measurement statistics are used for computing the
expectation values of observables.
The idea behind this method is to cancel the measurement errors by utilizing the noisy expectation values generated by sequential measurements,
each determined by a term in the truncated Neumann series.
The method is deliberately simple, does not make any assumption about the actual physical noise model,
and does not require calibrating the stochastic matrix {a priori}.

The paper is organized as follows. Section~\ref{sec:expectation value}
describes the quantum task of computing expectation values and
explains how the noisy measurement incurred bias to the results.
Section~\ref{sec:Neumann series} presents the error mitigation technique
via truncated Neumann series.
Section~\ref{sec:experimental results} reports the
experimental demonstration of our error mitigation method.
The Appendices summarize technical details used in the main text.

\section{Computing the expectation value}\label{sec:expectation value}

Let $\rho$ be an $n$-qubit quantum state generated by a quantum circuit.
Most of the quantum computing tasks end with computing the expectation value $\tr[O\rho]$ of a given observable $O$ within
a prefixed precision $\varepsilon$, by post-processing the measurement outcomes of the quantum state.
This task is the essential component of multifarious quantum algorithms, 
notable practical examples of which are variational quantum eigensolvers~\cite{peruzzo2014variational,mcclean2016theory},
quantum approximate optimization algorithm~\cite{farhi2014quantum}, and
quantum machine learning~\cite{biamonte2017quantum,havlivcek2019supervised}.

For simplicity, we assume that the observable $O$ is diagonal
in the computational basis and its elements take values in the range $[-1,1]$, i.e.,
\begin{align}\label{eq:observable}
  O = \sum_{\bm{x}\in\{0,1\}^n} O(\bm{x})\proj{\bm{x}},\quad
  \vert O(\bm{x})\vert \leq 1,
\end{align}
where $O(\bm{x})$ is the $\bm{x}$-th diagonal element of $O$
and $\vert\alpha\vert$ is the absolute value of $\alpha$.
Note that we adopt the convention that the diagonal elements are indexed from $0$.
Consider $M$ independent experiments where in each round
we prepare the state $\rho$ using the same quantum circuit
and measure each qubit in the computational basis (see, e.g., Fig.~\ref{fig:expectation}). Let
$\bm{s}^m\in\{0,1\}^n$ be the measurement outcome observed in the $m$-th round.
We further define the empirical mean value
\begin{align}\label{eq:ideal-average}
\eta^{(0)} := \frac{1}{M}\sum_{m=1}^M O(\bm{s}^m).
\end{align}
Let $\opn{vec}(\rho)$ be the $2^n$-dimensional column vector formed by the diagonal elements of $\rho$.
Then~\cite{bravyi2020mitigating}
\begin{align}\label{eq:ideal-expectation}
  E^{(0)} := \bE[\eta^{(0)}]
= \sum_{\bm{x}\in\{0,1\}^n}O(\bm{x})\langle \bm{x}\vert\opn{vec}(\rho) = \tr[O\rho],
\end{align}
where $\bE[X]$ is the expectation of the random variable $X$.
Eq.~\eqref{eq:ideal-expectation} implies that $\eta^{(0)}$ is an unbiased estimator of $\tr[O\rho]$.
What's more, the standard deviation $\sigma(\eta^{(0)})\leq1/\sqrt{M}$.
By Hoeffding's inequality~\cite{hoeffding1994probability}, $M=2\log(2/\delta)/\varepsilon^2$ would guarantee that
\begin{align}
    \opn{Pr}\{\vert\eta^{(0)} - \tr[O\rho]\vert \leq \varepsilon\} \geq 1 - \delta,
\end{align}
where $\opn{Pr}\{\cdot\}$ is the event's probability, $\delta$ is the specified confidence,
and all logarithms are in base $2$ throughout this paper.

\begin{figure}
  \centering
  \includegraphics[width=0.35\textwidth]{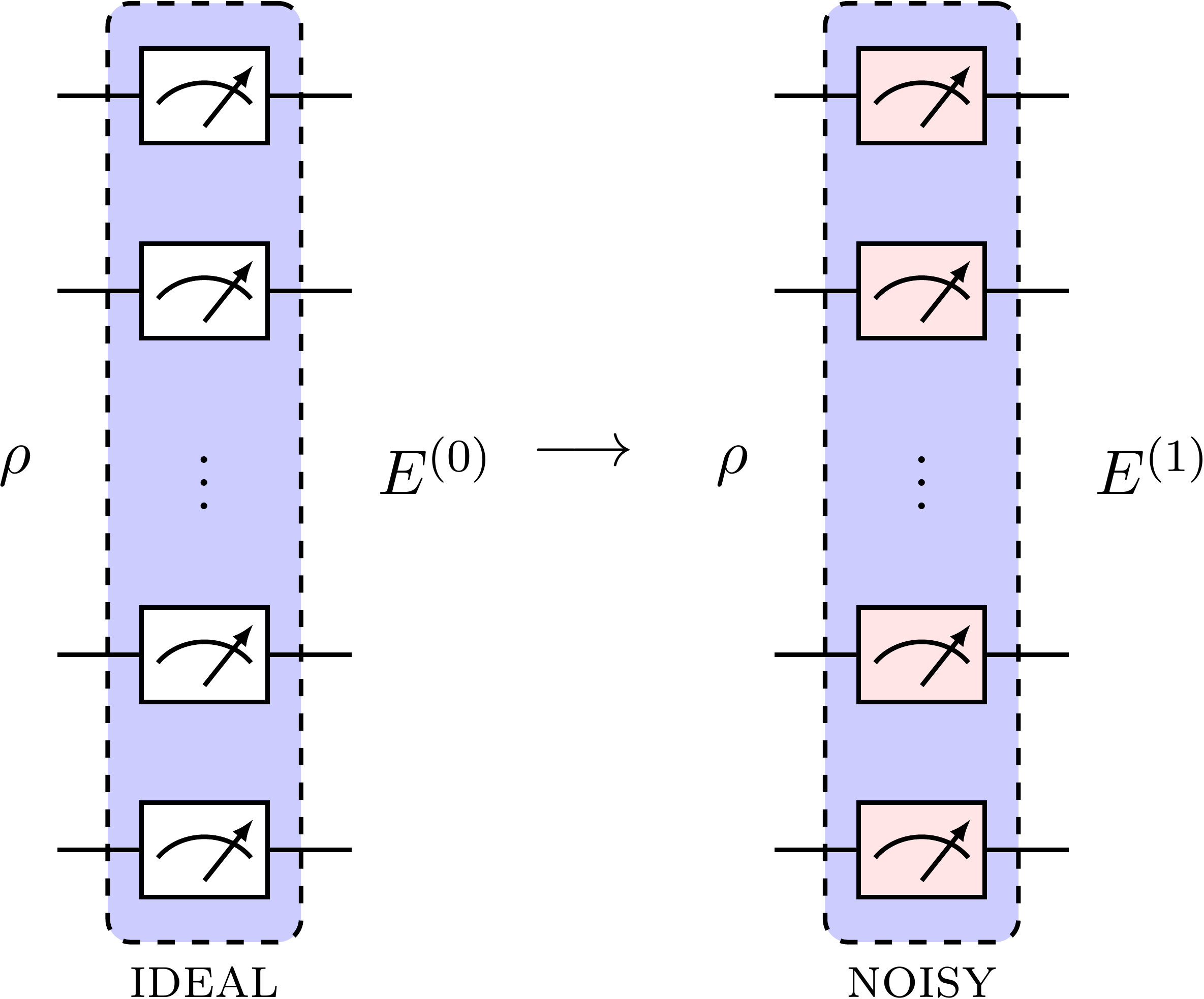}
  \caption{Computing the expectation value $\tr[O\rho]$ with the ideal
        measurement device (left) and the noisy measurement device (right)~\cite{note}.}
  \label{fig:expectation}
\end{figure}

However, measurement devices on current quantum hardware inevitably suffer from hardware
imperfections that lead to readout errors, which are manifested as a bias toward
the expectation values we aim to compute (cf. the right side of Fig.~\ref{fig:expectation}).
As previously mentioned, in the most general scenario, these errors are modeled by a
$2^n\times 2^n$ noise matrix $A$.
If there were no measurement error at all, $A$ is the identity matrix $I$.
The off-diagonal elements of $A$ completely characterize the readout errors.
By definition, $A$ is column-stochastic in the sense that
the elements of each column are non-negative and sum to $1$.

Suppose now that we adopt the same procedure for computing $\eta$ in~\eqref{eq:ideal-average},
where we perform $M$ independent experiments and collect the measurement outcomes.
Denote by $\bm{s}^{m,1}\in\{0,1\}^n$ the outcome observed in the $m$-th round,
where the superscript $1$ indicates that the noisy measurement is applied.
As~\eqref{eq:ideal-average}, we define
\begin{align}\label{eq:noisy-average}
\eta^{(1)} := \frac{1}{M}\sum_{m=1}^M O(\bm{s}^{m,1}).
\end{align}
We prove in Appendix~\ref{appx:noisy-expectation} that
\begin{align}\label{eq:noisy-expectation}
E^{(1)} := \bE[\eta^{(1)}] = \sum_{\bm{x}\in\{0,1\}^n}O(\bm{x})\langle \bm{x}\vert A \opn{vec}(\rho),
\end{align}
indicating that $\eta^{(1)}$ is no longer an estimator of $\tr[O\rho]$.
Comparing Eqs.~\eqref{eq:ideal-expectation} and~\eqref{eq:noisy-expectation},
we find that in the ideal case, the sampled probability distribution approximates $\opn{vec}(\rho)$
due to the weak law of large numbers,
while in the noisy case, the sampled probability distribution approximates $A\opn{vec}(\rho)$,
leading to a bias in the estimator.

\section{Error mitigation via truncated Neumann series}\label{sec:Neumann series}

A direct approach to eliminate the measurement errors from $A\opn{vec}(\rho)$ is
to apply the inverse matrix $A^{-1}$. However, this approach is resource-consuming
and only feasible when $n$ is small. To deal with this difficulty,
we simulate the effect of $A^{-1}$ using a truncated Neumann series. That is,
$A^{-1}$ is approximated by a linear combination of the terms $A^k$ for different $k$,
with carefully chosen coefficients.
This idea has previously been applied for linear data detection
in massive multiuser multiple-input multiple-output wireless systems~\cite{wu2013approximate}.

Define the \emph{noise resistance} of the noise matrix $A$ as
\begin{align}\label{eq:noise resistance}
   \xi := 2\left(1 - \min_{\bm{x}\in\{0,1\}^n}\bra{\bm{x}}A\ket{\bm{x}}\right).
\end{align}
By definition, $1-\xi/2$ is the minimal diagonal element of $A$.
Intuitively, $\xi/2$ characterizes the noisy measurement device's behavior
in the worst-case scenario since it is the maximal probability for which the
true outcome should be $\bm{x}$ yet the actual outcome is not $\bm{x}$.
In the following, we assume $\xi<1$, which is equivalent to the condition that
the minimal diagonal element of $A$ is larger than $0.5$.
This assumption is reasonable since otherwise
the measurement device is too noisy to be applied from the practical perspective.
Under this assumption, the stochastic matrix $A$ is nonsingular
and the Neumann series implies that~\cite[Theorem 4.20]{stewart1998matrix}
\begin{subequations}\label{eq:Neumann-series}
\begin{align}
A^{-1}
&=\sum_{k=0}^\infty (I-A)^k \\
&=\sum_{k=0}^K (I-A)^k + \cO((I-A)^{K+1}) \\
&= \sum_{k=0}^K c_K(k) A^k + \cO((I-A)^{K+1}),
\end{align}
\end{subequations}
where for arbitrary non-negative integers $0\leq k \leq K$,
the coefficient function is defined as
\begin{align}\label{eq:c_k}
    c_K(k) := (-1)^{k}\binom{K+1}{k+1},
\end{align}
and $\binom{n}{k}$ is the binomial coefficient.
Intuitively, Eq.~\eqref{eq:Neumann-series} indicates that one may approximate
the inverse matrix $A^{-1}$ using the first $K$ Neumann series terms,
if the behavior of the remaining terms $\cO((I-A)^{K+1})$ can be bounded.
We show that this is indeed the case in the measurement error mitigation task.
More specifically, using the first $K+1$ terms in the expansion~\eqref{eq:Neumann-series} of $A^{-1}$,
we obtain the following.
\begin{theorem}\label{thm:approximation}
For arbitrary positive integer $K$, it holds that
\begin{align}\label{eq:approximation}
    \left\vert \tr[O\rho] - \sum_{k=1}^{K+1}c_K(k-1) E^{(k)}\right\vert
\leq \xi^{K+1},
\end{align}
where
\begin{align}\label{eq:E_k}
    E^{(k)} := \sum_{\bm{x}\in\{0,1\}^n}O(\bm{x})\langle \bm{x}\vert A^{k}\opn{vec}(\rho).
\end{align}
\end{theorem}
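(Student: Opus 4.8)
The plan is to reduce the left-hand side of~\eqref{eq:approximation} to the absolute value of a single scalar of the form $\bm{o}(I-A)^{K+1}\bm{\rho}$ and then bound that scalar with one matrix-norm estimate. Throughout, write $\bm{o}$ for the row vector with entries $O(\bm{x})$ and set $\bm{\rho}:=\opn{vec}(\rho)$, so that definition~\eqref{eq:E_k} reads $E^{(k)}=\bm{o}A^{k}\bm{\rho}$ while $\tr[O\rho]=E^{(0)}=\bm{o}\bm{\rho}$ by~\eqref{eq:ideal-expectation}.

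First I would gather the linear combination into matrix form: shifting the summation index by one and factoring out $A$, the combination $\sum_{k=1}^{K+1}c_K(k-1)E^{(k)}$ equals $\bm{o}\,A\bigl(\sum_{k=0}^{K}c_K(k)A^{k}\bigr)\bm{\rho}$, which by the identity already recorded in~\eqref{eq:Neumann-series} equals $\bm{o}\,A\bigl(\sum_{k=0}^{K}(I-A)^{k}\bigr)\bm{\rho}$. Next I would evaluate $A\sum_{k=0}^{K}(I-A)^{k}$: writing $A=I-(I-A)$ makes this sum telescope to $I-(I-A)^{K+1}$. Substituting back, the combination becomes $\bm{o}\bm{\rho}-\bm{o}(I-A)^{K+1}\bm{\rho}=\tr[O\rho]-\bm{o}(I-A)^{K+1}\bm{\rho}$, so the quantity on the left of~\eqref{eq:approximation} is \emph{exactly} $\bigl\vert\bm{o}(I-A)^{K+1}\bm{\rho}\bigr\vert$. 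This step is purely algebraic; in particular it uses neither the invertibility of $A$ nor the hypothesis $\xi<1$, the latter being needed only to guarantee convergence of the full Neumann series in~\eqref{eq:Neumann-series}.

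The remaining step bounds that scalar. By Hölder's inequality followed by submultiplicativity of the induced $1$-norm,
\begin{align*}
  \bigl\vert\bm{o}(I-A)^{K+1}\bm{\rho}\bigr\vert
  &\le \norm{\bm{o}}{\infty}\,\norm{I-A}{1}^{K+1}\,\norm{\bm{\rho}}{1} \\
  &\le \norm{I-A}{1}^{K+1},
\end{align*}
since $\norm{\bm{o}}{\infty}\le1$ by~\eqref{eq:observable} and $\norm{\bm{\rho}}{1}=1$ because $\bm{\rho}$ is a probability vector. It then remains to identify $\norm{I-A}{1}$, the largest absolute column sum of $I-A$. In the $\bm{y}$-th column the diagonal entry contributes $1-\bra{\bm{y}}A\ket{\bm{y}}\ge0$, while the off-diagonal entries contribute $\sum_{\bm{x}\ne\bm{y}}A_{\bm{x}\bm{y}}$, which equals $1-\bra{\bm{y}}A\ket{\bm{y}}$ because $A$ is column-stochastic; that column therefore sums to $2\bigl(1-\bra{\bm{y}}A\ket{\bm{y}}\bigr)$, and maximizing over $\bm{y}$ gives $\norm{I-A}{1}=\xi$ by definition~\eqref{eq:noise resistance}. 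Together with the display above, this gives~\eqref{eq:approximation}.

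I expect the real content to lie not in any estimation difficulty but in the choice of norm. The residual $\bm{o}(I-A)^{K+1}\bm{\rho}$ must be controlled in the induced $1$-norm --- the norm matched to $\bm{\rho}$ being an $\ell_1$-normalized probability vector and $\bm{o}$ being entrywise $\ell_\infty$-bounded --- because that is precisely the norm under which $I-A$ collapses to the combinatorial quantity $\xi$ via column-stochasticity. A spectral-radius estimate of $(I-A)^{K+1}$ would be both weaker and harder to make explicit, so spotting the right norm is the crux; once that is fixed, the telescoping identity of the second step and the column-sum computation of the third are routine.
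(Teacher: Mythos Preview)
Your proof is correct and follows essentially the same route as the paper: both reduce the left-hand side to $\bigl\vert\bm{o}(I-A)^{K+1}\bm{\rho}\bigr\vert$ via the telescoping identity, then bound it through the induced $1$-norm using $\|\bm{o}\|_\infty\le1$, $\|\bm{\rho}\|_1=1$, submultiplicativity, and the column-sum computation $\|I-A\|_1=\xi$. The only cosmetic difference is that you invoke H\"older's inequality directly where the paper expands the double sum over $\bm{x},\bm{y}$ and bounds term by term.
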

The proof is given in Appendix~\ref{appx:approximation}.
As evident from Theorem~\ref{thm:approximation},
the noise resistance $\xi$ of the noise matrix $A$ determines the number of terms required in the
truncated Neumann series to approximate $A^{-1}$ to the desired precision.
What is more, since $\xi<1$, the approximation error decays exponentially in terms of $K$.
By the virtue of~\eqref{eq:noisy-expectation}, $E^{(k)}$ can be viewed as the noisy
expectation value generated by a noisy measurement device whose corresponding noise
matrix is $A^k$.
Let $\overline{E}:=\sum_{k=1}^{K+1}c_K(k-1) E^{(k)}$.
Theorem~\ref{thm:approximation} inspires a systematic way to estimate the expectation
value $\tr[O\rho]$ in two steps.

\begin{figure}[!htbp]
  \centering
  \includegraphics[width=0.4\textwidth]{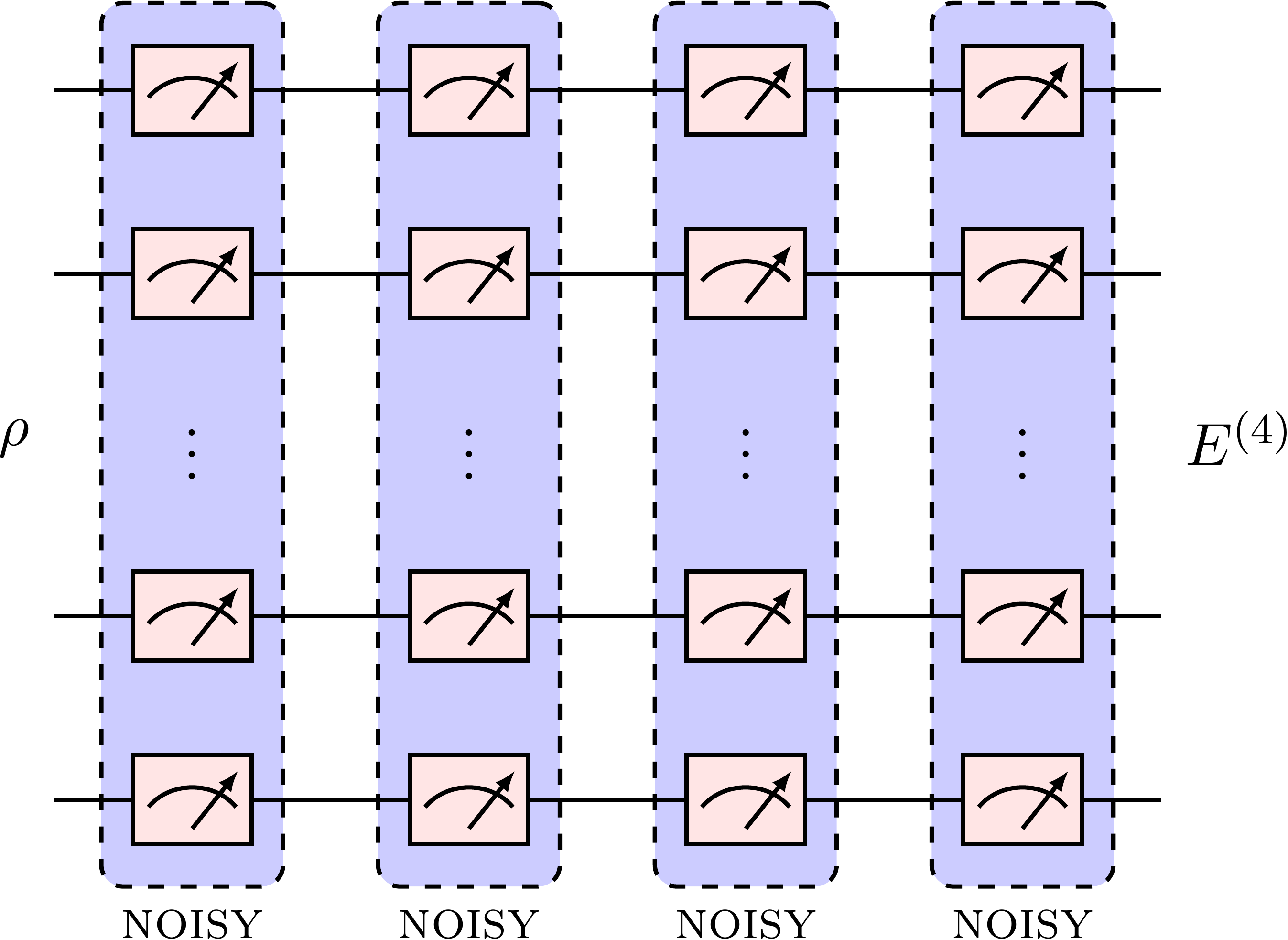}
  \caption{Experimental setup for estimating $E^{(4)}$,
        in which the noisy measurement device (box in blue) is executed $4$ times sequentially.}
  \label{fig:expectation-k4}
\end{figure}

Firstly, we choose $K$ for which
the RHS. of~\eqref{eq:approximation} evaluates to $\varepsilon$, yielding
the optimal truncated number
\begin{align}\label{eq:K-opt}
    K = \left\lceil\frac{\log\varepsilon}{\log\xi} - 1\right\rceil.
\end{align}
Such a choice guarantees that $\overline{E}$ is $\varepsilon$-close to the expectation value $\tr[O\rho]$.
Secondly, we compute the quantity $\overline{E}$
by estimating each term $E^{(k)}$ and computing the linear combination according
to the coefficients $c_K$.
Since $\overline{E}$ itself is only an $\varepsilon$-estimate
of $\tr[O\rho]$, it suffices to
approximate $\overline{E}$ within an error $\varepsilon$.
Motivated by the relation between $\eta^{(1)}$ and $E^{(1)}$ (see the discussions and
calculations in obtaining~\eqref{eq:noisy-average}),
we declare that each $E^{(k)}$ can be estimated via the following procedure:
\begin{enumerate}
  \item Generate a quantum state $\rho$.
  \item Using $\rho$ as input, execute the noisy measurement device $k$ times
        \emph{sequentially} and collect the outcome produced by the final measurement device,
        i.e., the $k$-th measurement device.
  \item Repeat the above two steps $M$ rounds and collect the measurement outcomes.
  \item Define an average analogous to~\eqref{eq:noisy-average} and output it as an estimate
        of $E^{(k)}$.
\end{enumerate}
We elaborate thoroughly on the concept of sequential measurement
in Appendix~\ref{appx:sequential measurements} and show that
the classical noise model describing the
sequential measurement repeating $k$ times is effectively characterized by the
stochastic matrix $A^k$.
For a sequential measurement repeating $k$ times,
one can think of the rightmost $k-1$ measurements as
implementing the calibration subroutine since they accept the computational basis
states as inputs. To some extent, this is a \emph{dynamic} calibration procedure where
we do not statically enumerate all computational bases as input states
but dynamically prepare the input states based on the output information
of the target state from the first measurement device.
For illustrative purpose,
we demonstrate in Fig.~\ref{fig:expectation-k4} the experimental setup for estimating
the noisy expectation value $E^{(4)}$, where the measurement device is repeated four
times in each round. We summarize the whole procedure in the following Algorithm~\ref{alg:Neumann series}.

\renewcommand{\algorithmicrequire}{\textbf{Input:}}
\renewcommand{\algorithmicensure}{\textbf{Output:}}
\begin{algorithm}[H]
\caption{Error mitigation via truncated Neumann series}
\begin{algorithmic}[1] \label{alg:Neumann series}
\REQUIRE Quantum circuit generating the $n$-qubit state $\rho$, \\
              \hskip1.5em the $n$-qubit quantum observable $O$, \\
              \hskip1.5em the $n$-qubit noisy measurement device, \\
              \hskip1.5em noise resistance $\xi$, \\
              \hskip1.5em probability tolerance $\delta$, \\
              \hskip1.5em precision parameter $\varepsilon$.

\ENSURE $\eta$, as an estimate of $\tr[O\rho]$.

\STATE Compute $K =\left\lceil \log\varepsilon/\log\xi - 1\right\rceil$;

\STATE Compute $\Delta = \binom{2K+2}{K+1} - 1$;

\STATE Compute $M =\lceil 2(K+1)\Delta\log(2/\delta)/\varepsilon^2 \rceil$;

\FOR{$k=1,\cdots,K+1$} 
\FOR{$m=1,\cdots,M$} 
\STATE\hskip0.5em Run the quantum circuit to generate $\rho$;

\STATE\hskip0.5em Execute the measurement device $k$ times sequentially;

\STATE\hskip0.5em Obtain the measurement outcome $\bm{s}^{m,k}$;
\ENDFOR
  \STATE Compute $\eta^{(k)} = \frac{1}{M}\sum_{m=1}^{M}O(\bm{s}^{m,k})$;
\ENDFOR

\STATE Compute $\eta = \sum_{k=1}^{K+1} c_K(k-1) \eta^{(k)}$, where $c_K$ is defined in~\eqref{eq:c_k};

\STATE Output $\eta$.

\end{algorithmic}
\end{algorithm}

We claim that the output $\eta$ of Algorithm~\ref{alg:Neumann series}
approximates the expectation value $\tr[O\rho]$ pretty well,
as captured by the following proposition.

\begin{proposition}\label{prop:good-estimate}
The output $\eta$ of Algorithm~\ref{alg:Neumann series} satisfies
\begin{align}\label{eq:good-estimate}
    \opn{Pr}\left\{\vert\tr[O\rho] - \eta\vert\leq 2\varepsilon\right\} \geq 1 - \delta.
\end{align}
\end{proposition}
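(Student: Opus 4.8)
The plan is to split, via the triangle inequality,
\[
\vert\tr[O\rho]-\eta\vert \le \vert\tr[O\rho]-\overline E\vert + \vert\overline E-\eta\vert,
\]
where $\overline E=\sum_{k=1}^{K+1}c_K(k-1)E^{(k)}$, and to control the truncation (bias) term deterministically and the sampling (statistical) term in high probability. For the bias term, the choice $K=\lceil\log\varepsilon/\log\xi-1\rceil$ from~\eqref{eq:K-opt} gives $K+1\ge\log\varepsilon/\log\xi$; since $\xi<1$ we have $\log\xi<0$, so this rearranges to $(K+1)\log\xi\le\log\varepsilon$, i.e. $\xi^{K+1}\le\varepsilon$, and Theorem~\ref{thm:approximation} then yields $\vert\tr[O\rho]-\overline E\vert\le\xi^{K+1}\le\varepsilon$. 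It therefore remains to show that $\vert\overline E-\eta\vert\le\varepsilon$ with probability at least $1-\delta$.

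For the statistical term I would first record that $\eta$ is an unbiased estimator of $\overline E$. Indeed, by the analysis of sequential measurements in Appendix~\ref{appx:sequential measurements}, executing the measurement device $k$ times in sequence realizes a noisy measurement with stochastic matrix $A^k$, so, arguing exactly as in the derivation of~\eqref{eq:noisy-expectation}, each outcome satisfies $\bE[O(\bm s^{m,k})]=\sum_{\bm x}O(\bm x)\langle\bm x\vert A^k\opn{vec}(\rho)=E^{(k)}$; averaging over $m$ and taking the linear combination over $k$ gives $\bE[\eta]=\overline E$. Moreover the $(K+1)M$ variables $O(\bm s^{m,k})$ are independent, since each is produced by a fresh run of the inner loop of Algorithm~\ref{alg:Neumann series}, and each lies in $[-1,1]$ by~\eqref{eq:observable}. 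Writing $\eta=\sum_{k=1}^{K+1}\sum_{m=1}^{M}\tfrac{c_K(k-1)}{M}O(\bm s^{m,k})$ as a weighted sum of bounded independent variables, where the $(k,m)$ summand has range $2\vert c_K(k-1)\vert/M$, Hoeffding's inequality gives
\[
\opn{Pr}\{\vert\eta-\overline E\vert\ge\varepsilon\}
\le 2\exp\!\left(-\frac{\varepsilon^2 M}{2\sum_{k=1}^{K+1}c_K(k-1)^2}\right).
\]

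The remaining ingredient is the evaluation $\sum_{k=1}^{K+1}c_K(k-1)^2=\sum_{j=1}^{K+1}\binom{K+1}{j}^2=\binom{2K+2}{K+1}-1=\Delta$, which follows from the Vandermonde identity $\sum_{j=0}^{K+1}\binom{K+1}{j}^2=\binom{2K+2}{K+1}$ by dropping the $j=0$ term; this is precisely the quantity $\Delta$ computed in the algorithm. Plugging in $M\ge 2(K+1)\Delta\log(2/\delta)/\varepsilon^2$ makes the exponent at least $(K+1)\log(2/\delta)\ge\log(2/\delta)$, so the right-hand side is at most $\delta$. Combining this with the bias bound gives $\opn{Pr}\{\vert\tr[O\rho]-\eta\vert\le 2\varepsilon\}\ge\opn{Pr}\{\vert\overline E-\eta\vert\le\varepsilon\}\ge1-\delta$. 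I expect the only point requiring care to be tracking the constants in the weighted Hoeffding bound so that the sampling overhead enters through the $\ell_2$-type quantity $\Delta=\sum_{k}c_K(k-1)^2$ used in the algorithm, rather than a larger quantity built from $\sum_k\vert c_K(k-1)\vert$; the binomial-sum identity that identifies this factor with $\Delta$, while essential, is routine.
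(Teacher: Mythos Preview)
Your proof is correct and follows the same architecture as the paper's: split via the triangle inequality into the truncation term (handled by Theorem~\ref{thm:approximation} and the choice of $K$) and the statistical term (handled by unbiasedness plus Hoeffding), then identify $\sum_{k}c_K(k-1)^2$ with $\Delta=\binom{2K+2}{K+1}-1$ via Vandermonde. The only noteworthy difference is that you apply Hoeffding directly to the weighted sum and obtain the exponent $M\varepsilon^2/(2\Delta)$, whereas the paper introduces the rescaled variables $X_{m,k}=(K+1)c_K(k-1)O(\bm{s}^{m,k})$ and records the exponent $M\varepsilon^2/(2(K+1)\Delta)$; your bound is tighter by a factor $K+1$ (the paper's extra $(K+1)$ appears to be an arithmetic slip), but either bound suffices since the algorithm's $M=\lceil 2(K+1)\Delta\log(2/\delta)/\varepsilon^2\rceil$ was chosen for the weaker one.
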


Proof of the proposition is given in Appendix~\ref{appx:good-estimate}.
Intuitively, Eq.~\eqref{eq:good-estimate} says that the output $\eta$ of
Algorithm~\ref{alg:Neumann series}
estimates the ideal expectation value $\tr[O\rho]$ with error $2\varepsilon$ at a probability
greater than $1-\delta$. Analyzing Algorithm~\ref{alg:Neumann series},
we can see that we ought to expand the Neumann series to the $K$-th order,
where $K$ is computed via~\eqref{eq:K-opt},
and estimate the $K+1$ noisy expectation values $E^{(1)},\cdots,E^{(K+1)}$ individually.
For each expectation, we need $M$ copy of quantum states.
As so, the total number of quantum states consumed is given by
\begin{align}
    M(K+1)
&= 2(K+1)^2\Delta\log(2/\delta)/\varepsilon^2 \nonumber \\
&\approx 4^K\log(2/\delta)/\varepsilon^2.\label{eq:no-of-states}
\end{align}
In other words, our error mitigation method increases the number of
quantum states that is required to achieve the given precision $\varepsilon$ by
a factor of $4^K$ compared with the case of the ideal measurement.
In Fig.~\ref{fig:K_vs_xi}, we plot the optimal truncated number $K$~\eqref{eq:K-opt}
as a function of the noise resistance $\xi$, the error tolerance parameter is fixed as $\varepsilon=0.01$.
One can check from the figure that $K\leq 10$ whenever the noise resistance satisfies $\xi\leq0.657$
(Equivalently, the minimal diagonal element of $A$ is larger than $0.67$).
That is to say, the incurred error mitigation overhead $4^K$
is independent of the system size, so long as the noise resistance $\xi$ is moderate,
in the sense that it is below a certain threshold (say $0.657$).
On the other hand, the number of noisy quantum measurements applied in Algorithm~\ref{alg:Neumann series}
is given by
\begin{align}
    \left(\sum_{k=1}^{K+1}k\right)M \approx 2(K+1)^3\Delta\log(2/\delta)/\varepsilon^2.
\end{align}
Compared to~\eqref{eq:no-of-states}, our method has used more number of measurements than the number
of quantum states by a multiplier $K+1$. We remark that both costs are roughly characterized
by the prominent factor $4^K$.

\begin{figure}
  \centering
  \includegraphics[width=0.48\textwidth]{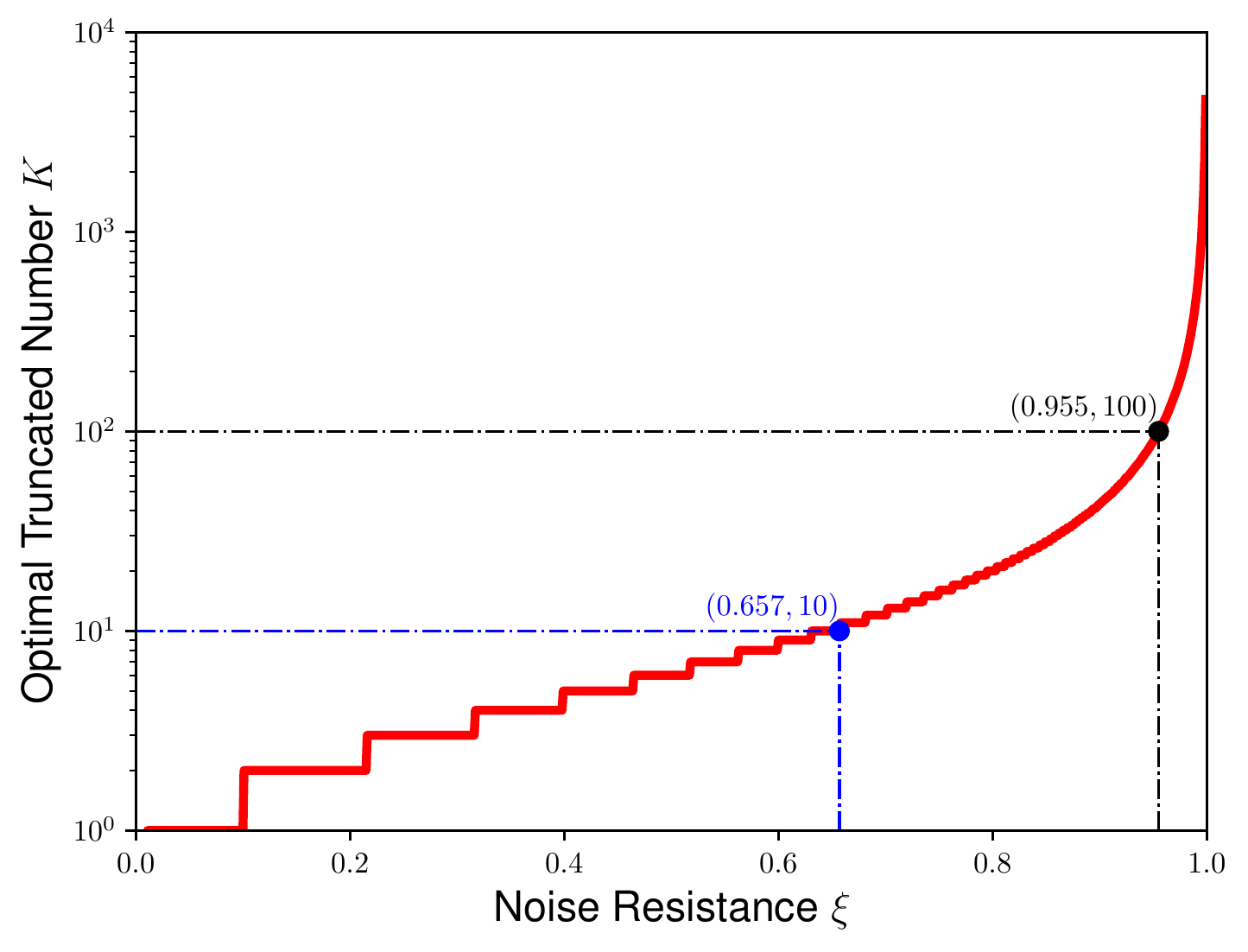}
  \caption{The optimal truncated number $K$~\eqref{eq:K-opt} as a function
          of the noise resistance $\xi$, where the precision parameter is $\varepsilon=0.01$.}
  \label{fig:K_vs_xi}
\end{figure}

\subsection{Discussion on the noise resistance}

In practical applications, $\xi$ can be obtained from the specifications of NISQ devices.
For example, in IBM quantum devices, the specifications are often reported $2-10\%$~\cite{kwon2020hybrid}.
If such information is not available, we may perform calibration to obtain $A$ first
and then compute $\xi$, which is still resource-efficient compared to computing
the inverse matrix $A^{-1}$.

When defining $\xi$ in~\eqref{eq:noise resistance}, we do not consider any structure of $A$.
If certain noise model is assumed, the calculation of $\xi$ can be simplified.
In the following, we consider the tensor product noise model
and show that the noise resistance can be compute analytically.
Assume $A$ is a tensor product of $n$ $2\times 2$ stochastic matrices, i.e.,
\begin{align}
    A_{\opn{tp}} = \begin{bmatrix}
                1 - \alpha_1 & \beta_1 \\
                \alpha_1 & 1 - \beta_1
    \end{bmatrix}
    \otimes
    \cdots
    \otimes
    \begin{bmatrix}
                1 - \alpha_n & \beta_n \\
                \alpha_n & 1 - \beta_n
    \end{bmatrix},
\end{align}
where $\alpha_i$ and $\beta_i$ are error rates describing the $i$-th qubit's readout errors
$0\to1$ and $1\to0$, respectively. One can show that
\begin{align}\label{eq:d-min-TP}
    \xi(A_{\opn{tp}}) = 2\left(1 - \prod_{i=1}^n \min\{1-\alpha_i, 1-\beta_i\}\right).
\end{align}
Specially, if $\alpha_i,\beta_i\ll1$, then $\xi\approx 2(1-1/e^\gamma)$, where
$\gamma := \sum_{i=1}^n\max\{\alpha_i,\beta_i\}$ is called the noise strength in~\cite{bravyi2020mitigating}.


\section{Experimental results}\label{sec:experimental results}

We apply the proposed error mitigation method to the following illustrative example
and demonstrate its performance. Consider the input state $\rho = \proj{\Phi}$, where
\begin{align}
    \ket{\Phi} := \frac{1}{\sqrt{2^n}}\sum_{i=0}^{2^n-1}\ket{i},
\end{align}
which is the maximal superposition state.
The observable $O$ is a tensor product of Pauli $Z$ operators, i.e., $O=Z^{\otimes n}$.
The ideal expectation value is $\tr[O\rho] = 0$. We choose $n=8$ and
randomly generate a noise matrix $A^\ast$ whose noise resistance satisfies
$\xi(A^\ast) \approx 0.657$ (as so the noise matrix is moderate).
We repeat the procedure for producing the noisy expectation
value $\eta^{(1)}$ and Algorithm~\ref{alg:Neumann series} for producing
the mitigated expectation value $\eta$ a total number of $1000$ times.
Note that all these experiments assume the same noise matrix $A$,
and the parameters are chosen as $\varepsilon=\delta=0.01$.
The obtained expectation values are scatted in Fig.~\ref{fig:error-mitigation-result}.
It is easy to see from the figure that the noisy measurement device,
characterized by the noise matrix $A^\ast$, incurs a bias $\approx -0.007$
to the estimated expectation values.
On the other hand, the error mitigated expectation values distributed evenly
around the ideal value $0$ within a distance of $0.01$ with high probability.
As evident from Fig.~\ref{fig:error-mitigation-result}, several
mitigated expectation values fall outside the expected region.
These statistical outcomes match our conclusion in Proposition~\ref{prop:good-estimate},
validating the correctness and performance of the proposed error mitigation method.

Fig.~\ref{fig:error-mitigation-result2} shows the (noisy and
mitigated) expectation values estimated via the above procedure
as a function of the number of qubits. In our numerical setup,
for an experiment whose number of qubits $n$ is less than $8$,
its corresponding noise matrix is obtained by partially tracing
out the rightmost $8-n$ qubit systems from $A^\ast$.
The entire experiment for each $n$ was repeated $1000$ times
in order to estimate the error bars.
The reason that the noisy estimates behave well for single and two qubits
is that the underlying noise matrices are close to the identity in
the total variation distance~\cite{maciejewski2020mitigation}.
It can be seen that the cross-talk
noise presented in the noisy measurement device severely
distorts the estimated expectation value while
our error mitigation method is insensitive to this kind of error.

\begin{figure}[!htbp]
  \centering
  \includegraphics[width=0.48\textwidth]{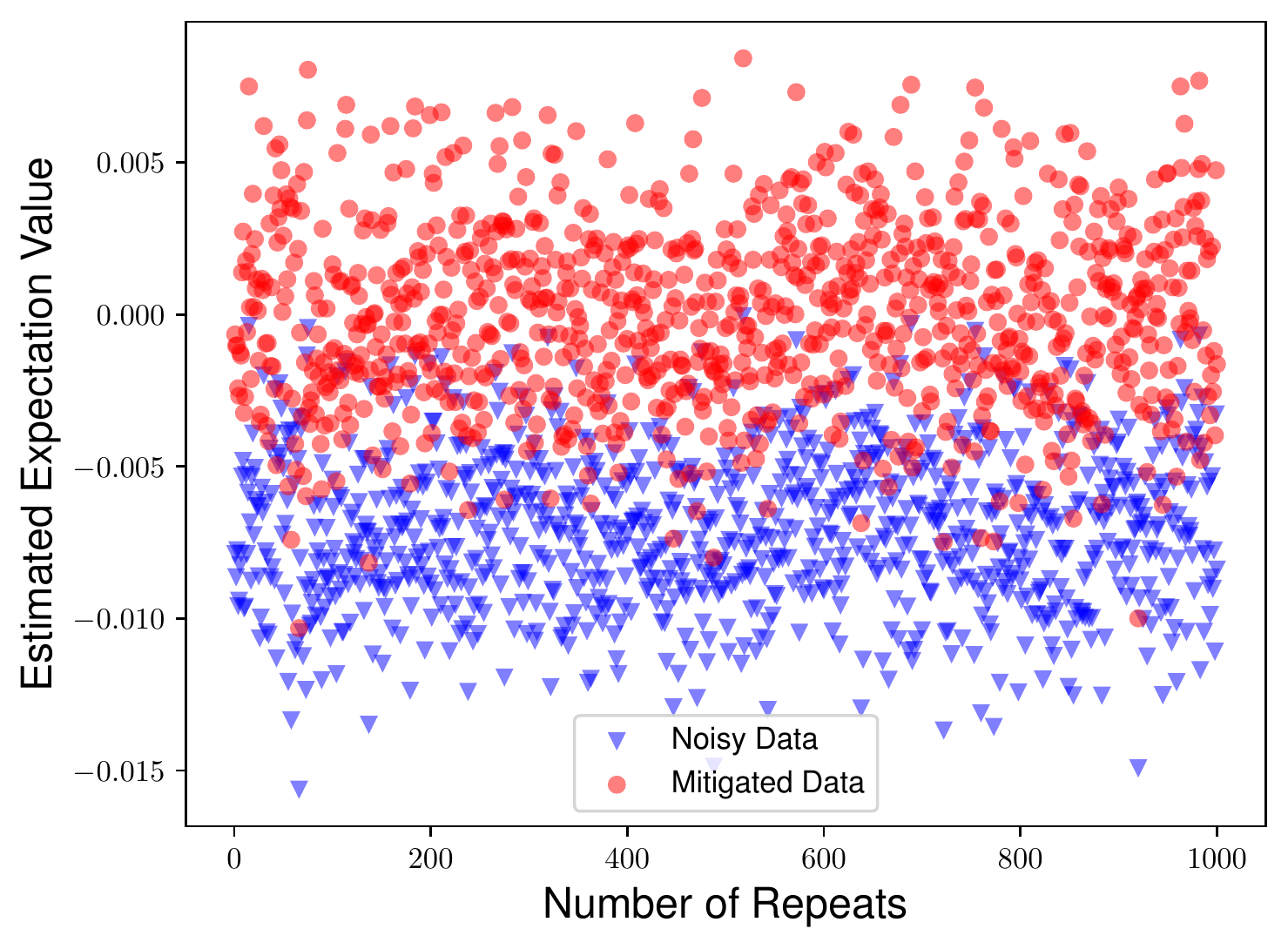}
  \caption{$1000$ noisy estimates $\eta^{(1)}$ (blue triangles)
        and error mitigated estimates $\eta$ (red dots)
        for the ideal expectation value $\tr[O\rho] = 0$.
        Here, the number of qubits is $8$.}
  \label{fig:error-mitigation-result}
\end{figure}

\begin{figure}[!htbp]
  \centering
  \includegraphics[width=0.48\textwidth]{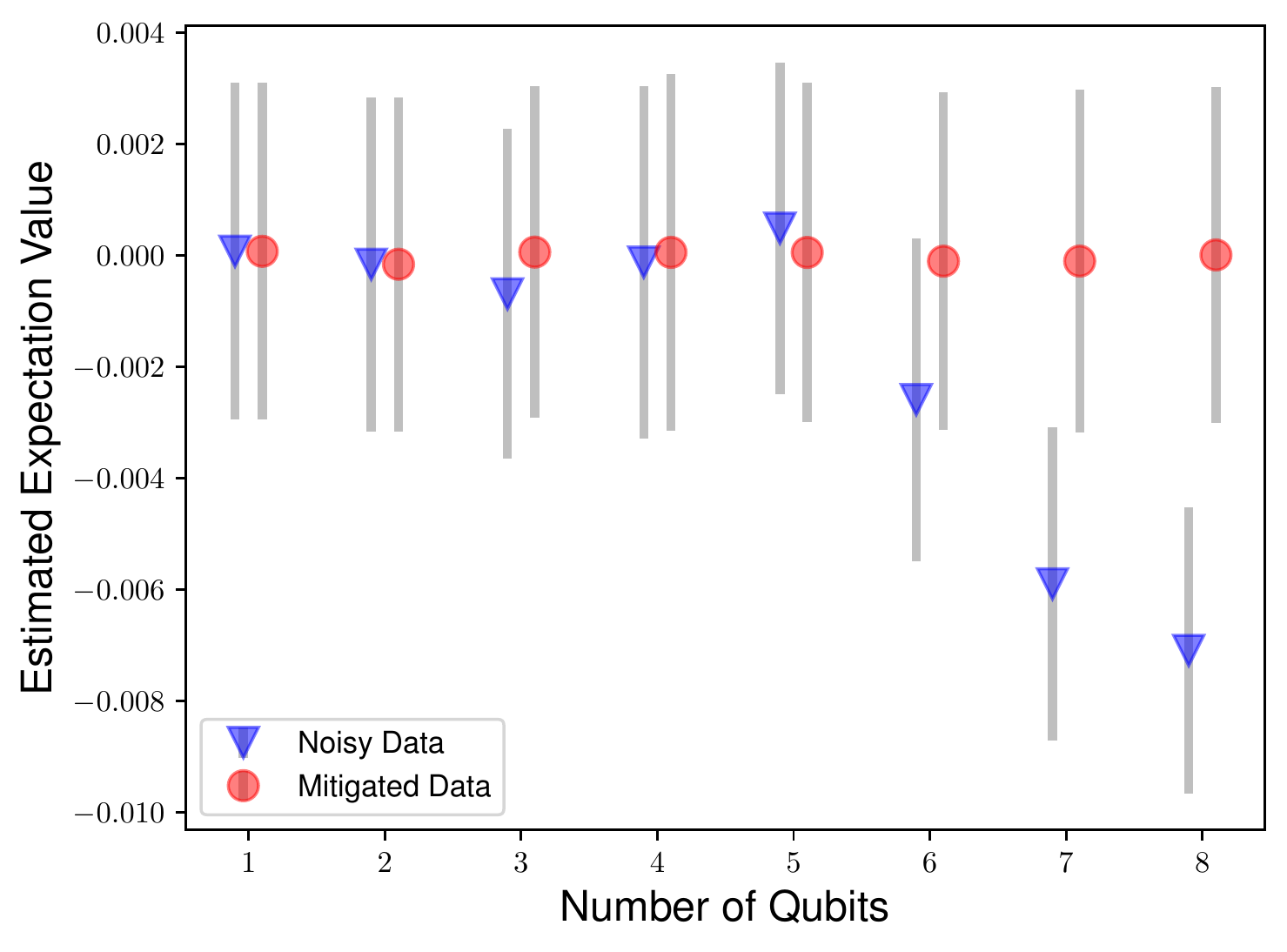}
  \caption{Average expectation values for $1\leq n \leq 8$ qubits obtained
            with (red dots) and without (blue triangles) the error mitigation method.
            Each error bar is estimated by repeating the experiment $1000$ times.}
  \label{fig:error-mitigation-result2}
\end{figure}

\section{Conclusions}\label{sec:conclusions}

We have introduced a scalable method to mitigate measurement errors in computing expectation
values of quantum observables, an essential building block of numerous quantum algorithms.
The idea behind this method is to approximate the inverse of the noise matrix determined
by the noisy measurement device using a small number of the Neumann series terms.
Our method via the truncated Neumann series outperforms the exact matrix inversion 
method by significantly reducing the resource costs in time and samples of quantum 
states while only slightly degrading the error mitigation performance.
In particular, our method works for any classical noise model and
does not require the calibration procedure to learn the noise matrix {a prior}.
Most importantly, the incurred error mitigation overhead is independent of the
system size, as long as the noise resistance of the noisy measurement device is moderate.
This property is beneficial and will be more and more important as the quantum circuit
sizes increase. We have numerically tested this method and found that the computation
accuracy is substantially improved. We believe that the proposed method will be useful
for experimental measurement error mitigation in NISQ quantum devices.

\section*{Acknowledgements}
We thank Runyao Duan for helpful suggestions. 
We would like to thank Zhixin Song for collecting the experiment data.


%

\setcounter{secnumdepth}{2}
\appendix
\widetext 

\section{Proof of Eq.~\eqref{eq:noisy-expectation}}\label{appx:noisy-expectation}

\begin{proof}
By the definition of $\eta^{(1)}$, we have
\begin{align}
  \eta^{(1)} = \frac{1}{M}\sum_{m=1}^MO(\bm{s}^{m,1})
= \frac{1}{M}\sum_{m=1}^M\sum_{\bm{x}\in\{0,1\}^n}O(\bm{x})\langle\bm{x}\vert\bm{s}^{m,1}\rangle
= \sum_{\bm{x}\in\{0,1\}^n}O(\bm{x})\langle \bm{x}\vert
  \left(\frac{1}{M}\sum_{m=1}^M\vert \bm{s}^{m,1}\rangle\right).
\end{align}
The expectation value can be evaluated as
\begin{align}
E^{(1)} := \bE[\eta^{(1)}]
&= \bE\left[\sum_{\bm{x}\in\{0,1\}^n}O(\bm{x})\langle \bm{x}\vert
  \left(\frac{1}{M}\sum_{m=1}^M\vert \bm{s}^{m,1}\rangle\right)\right] \\
&= \sum_{\bm{x}\in\{0,1\}^n}O(\bm{x})\langle \bm{x}\vert\bE\left[\frac{1}{M}\sum_{m=1}^M\vert \bm{s}^{m,1}\rangle\right] \\
&= \sum_{\bm{x}\in\{0,1\}^n}O(\bm{x})\langle \bm{x}\vert A\opn{vec}(\rho).
\end{align}
\end{proof}

\section{Proof of Theorem~\ref{thm:approximation}}\label{appx:approximation}

\begin{proof}
First of all, notice that
\begin{align}
  \left\vert \tr[O\rho] - \sum_{k=1}^{K+1}c_K(k-1) E^{(k)}\right\vert
&=   \left\vert \sum_{\bm{x}\in\{0,1\}^n}O(\bm{x})\langle \bm{x}\vert\opn{vec}(\rho)
  - \sum_{\bm{x}\in\{0,1\}^n}O(\bm{x})\langle \bm{x}\vert
    \left(\sum_{k=1}^{K+1} c_K(k-1)A^k\opn{vec}(\rho)\right)\right\vert \\
&=  \left\vert \sum_{\bm{x}\in\{0,1\}^n}O(\bm{x})\langle \bm{x}\vert\left(
        I - \sum_{k=1}^{K+1} c_K(k-1)A^k\right)\opn{vec}(\rho)\right\vert \\
&= \left\vert \sum_{\bm{x}\in\{0,1\}^n}O(\bm{x})\langle \bm{x}\vert\left(
        I - A\left(\sum_{k=1}^{K+1} c_K(k-1)A^{k-1}\right)\right)\opn{vec}(\rho)\right\vert\\
&= \left\vert \sum_{\bm{x}\in\{0,1\}^n}O(\bm{x})\langle \bm{x}\vert\left(
        I - A\left(\sum_{k=0}^K c_K(k)A^{k}\right)\right)\opn{vec}(\rho)\right\vert \\
&= \left\vert \sum_{\bm{x}\in\{0,1\}^n}O(\bm{x})\langle \bm{x}\vert\left(
        I - A\left(\sum_{k=0}^K (I-A)^k\right)\right)\opn{vec}(\rho)\right\vert\label{eq:appx:approximation-0} \\
&= \left\vert \sum_{\bm{x}\in\{0,1\}^n}O(\bm{x})\langle \bm{x}\vert\left(
        I - \left(I - (I-A)^{K+1}\right)\right)\opn{vec}(\rho)\right\vert\label{eq:appx:approximation-1}  \\
&= \left\vert \sum_{\bm{x}\in\{0,1\}^n}O(\bm{x})\langle \bm{x}\vert(I-A)^{K+1}\opn{vec}(\rho)\right\vert,
    \label{eq:appx:approximation-2}
\end{align}
where~\eqref{eq:appx:approximation-0} follows from~\eqref{eq:Neumann-series} and~\eqref{eq:appx:approximation-1} follows
from the closed-form formula of a geometric series.
Now we show that the quantity in~\eqref{eq:appx:approximation-2} can be bounded from above.
Define the induced matrix $1$-norm of a $m\times n$ matrix $B$ as
\begin{align}
   \norm{B}{1} := \max_{1\leq j \leq n}\sum_{i=1}^n\vert B_{ij}\vert
                \equiv \max_{1\leq j \leq n} \sum_{i=1}^n \vert \bra{i} B \ket{j}\vert,
\end{align}
which is simply the maximum absolute column sum of the matrix.
Let $\rho(\bm{y})$ is the $\bm{y}$-th diagonal element of the quantum state $\rho$.
Consider the following chain of inequalities:
\begin{subequations}\label{eq:bounding-tail}
\begin{align}
\left\vert \sum_{\bm{x}\in\{0,1\}^n}O(\bm{x})\langle \bm{x}\vert(I-A)^{K+1}\opn{vec}(\rho)\right\vert
&= \left\vert\sum_{\bm{x}\in\{0,1\}^n}\sum_{\bm{y}\in\{0,1\}^n}O(\bm{x})
    \rho(\bm{y})\langle \bm{x}\vert (I-A)^{K+1}\vert\bm{y}\rangle\right\vert\\
&\leq\sum_{\bm{x}\in\{0,1\}^n}\sum_{\bm{y}\in\{0,1\}^n}
    \vert O(\bm{x})\vert \cdot
    \rho(\bm{y}) \cdot \left\vert\langle \bm{x}\vert(I-A)^{K+1}\vert\bm{y}\rangle\right\vert\\
&\leq\sum_{\bm{x}\in\{0,1\}^n}\sum_{\bm{y}\in\{0,1\}^n}
    \rho(\bm{y})\left\vert\langle \bm{x}\vert(I-A)^{K+1}\vert\bm{y}\rangle\right\vert\label{eq:appx:approximation-3}\\
&=\sum_{\bm{y}\in\{0,1\}^n}\rho(\bm{y})
  \sum_{\bm{x}\in\{0,1\}^n}\left\vert\langle \bm{x}\vert(I-A)^{K+1}\vert\bm{y}\rangle\right\vert \\
&\leq\sum_{\bm{y}\in\{0,1\}^n} \rho(\bm{y}) \lVert (I-A)^{K+1} \lVert_1\label{eq:appx:approximation-4} \\
&= \lVert (I-A)^{K+1} \lVert_1 \label{eq:appx:approximation-5} \\
&\leq \lVert I-A \lVert_1^{K+1}\label{eq:appx:approximation-6} \\
&= \xi^{K+1},\label{eq:appx:approximation-7}
\end{align}
\end{subequations}
where~\eqref{eq:appx:approximation-3} follows from the assumption
that $\vert O(\bm{x})\vert \leq 1$ (cf. Eq.~\eqref{eq:observable}),
~\eqref{eq:appx:approximation-4} follows from the definition of induced matrix $1$-norm,
~\eqref{eq:appx:approximation-5} follows from the fact that $\rho$ is a quantum state
and thus $\sum_{\bm{y}}\rho(\bm{y})=1$,
~\eqref{eq:appx:approximation-6} follows from the submultiplicativity property of the induced matrix norm,
and~\eqref{eq:appx:approximation-7} follows from Lemma~\ref{lemma:min-norm} stated below.
We are done.
\end{proof}

\begin{lemma}\label{lemma:min-norm}
Let $A$ be a column stochastic matrix of size $d \times d$. It holds that
\begin{align}
    \norm{I - A}{1} = \xi(A),
\end{align}
where $\xi(A)$ is defined in~\eqref{eq:noise resistance}.
\end{lemma}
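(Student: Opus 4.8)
The plan is to compute the induced $1$-norm column by column, exploiting the fact that $A$ is column stochastic so that every entry of $A$ is non-negative, each column of $A$ sums to $1$, and in particular every diagonal entry satisfies $A_{jj}\in[0,1]$. First I would fix an arbitrary column index $j$ and split the sum $\sum_{i}\vert (I-A)_{ij}\vert$ into the diagonal term $i=j$ and the off-diagonal terms $i\neq j$. On the diagonal, $(I-A)_{jj}=1-A_{jj}\geq 0$, so the absolute value just drops. Off the diagonal, $(I-A)_{ij}=-A_{ij}\leq 0$, so $\vert (I-A)_{ij}\vert=A_{ij}$.

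Next I would assemble these pieces. Using column stochasticity, $\sum_{i\neq j}A_{ij}=\bigl(\sum_{i}A_{ij}\bigr)-A_{jj}=1-A_{jj}$, hence
\begin{align}
\sum_{i=1}^d\vert (I-A)_{ij}\vert = (1-A_{jj}) + \sum_{i\neq j}A_{ij} = 2\left(1-A_{jj}\right).
\end{align}
Taking the maximum over $j$ and recalling $\norm{B}{1}=\max_j\sum_i\vert B_{ij}\vert$ gives
\begin{align}
\norm{I-A}{1} = \max_{1\leq j\leq d} 2\left(1-A_{jj}\right) = 2\left(1-\min_{1\leq j\leq d}A_{jj}\right),
\end{align}
and since $A_{jj}=\bra{\bm{x}}A\ket{\bm{x}}$ for the basis vector indexed by $j$, the right-hand side is exactly $\xi(A)$ as defined in~\eqref{eq:noise resistance}. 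This closes the argument.

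There is essentially no analytic obstacle here; the only thing to be careful about is the sign bookkeeping in the absolute values, i.e.\ verifying that the diagonal entry of $I-A$ is non-negative (which needs $A_{jj}\leq 1$, a consequence of column stochasticity and non-negativity of the other entries in that column) while the off-diagonal entries are non-positive. Once the signs are pinned down, the column sum telescopes against the stochasticity constraint and the identity falls out immediately.
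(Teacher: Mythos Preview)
Your proposal is correct and follows essentially the same argument as the paper: both compute the $j$-th column sum of $|I-A|$ by splitting into the non-negative diagonal entry $1-A_{jj}$ and the off-diagonal entries $A_{ij}$, then use $\sum_{i\neq j}A_{ij}=1-A_{jj}$ from column stochasticity to obtain $2(1-A_{jj})$ and maximize over $j$. Your write-up is in fact slightly more explicit about why the signs fall as claimed, but the route is identical.
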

\begin{proof}
Since $A$ is column stochastic, $I-A$ has non-negative diagonal elements and negative off-negative elements.
Thus
\begin{align}
  \norm{I - A}{1}
&= \max_{1\leq j\leq d}\left(1 - A_{jj} + \sum_{i\neq j} A_{ij}\right) \\
&= \max_{1\leq j\leq d}\left(1 - A_{jj} + 1 - A_{jj}\right) \\
&= 2\max_{1\leq j\leq d}\left(1 - A_{jj}\right) \\
&= 2 - 2\min_{1\leq j\leq d} A_{jj} \\
&=: \xi(A),
\end{align}
where the second line follows from the fact that $A$ is column stochastic.
\end{proof}

\section{Sequential measurements}\label{appx:sequential measurements}

In the Appendix, we prove that the classical noise model describing the
sequential measurement repeating $k$ times is effectively characterized by the
stochastic matrix $A^k$. We begin with the simple case $k=2$.
Since the noise model is classical and linear in the input,
it suffices to consider the computational basis states as inputs.
As shown in Fig.~\ref{fig:sequential-measurement},
we apply the noisy quantum measurement device two times sequentially
on the input state $\proj{\bm{x}}$ in computational basis
where $\bm{x}\in\{0,1\}^{n}$.
Assume the measurement outcome of the first measurement is $\bm{y}$ and
the measurement outcome of the second measurement is $\bm{z}$,
where $\bm{y},\bm{z}\in\{0,1\}^{n}$.
Assume that the noise matrix associated with this sequential measurement is
$A'$. That is, the probability of obtaining the outcome $\bm{z}$ provided
the true outcome is $\bm{x}$ is given by $A'_{\bm{z}\bm{x}}$.
Practically, we input $\proj{\bm{x}}$ to the first noisy measurement device
and obtain the outcome $\bm{y}$. The probability of this event
is $A_{\bm{y}\bm{x}}$, by the definition of the noise matrix.
Similarly, we input $\proj{\bm{y}}$ to the second noisy measurement device
and obtain the outcome $\bm{z}$. The probability of this event
is $A_{\bm{z}\bm{y}}$. Inspecting the chain $\bm{x}\to\bm{y}\to\bm{z}$, we have
\begin{align}
    A'_{\bm{z}\bm{x}}
= \sum_{\bm{y}\in\{0,1\}^{n}} A_{\bm{y}\bm{x}}A_{\bm{z}\bm{y}}
= A^2_{\bm{z}\bm{x}}.
\end{align}
The above analysis justifies that the classical noise model describing the sequential
measurement repeating $2$ times is effectively characterized by the
stochastic matrix $A^2$. The general case can be analyzed similarly.

\begin{figure}[!htbp]
  \centering
  \includegraphics[width=0.3\textwidth]{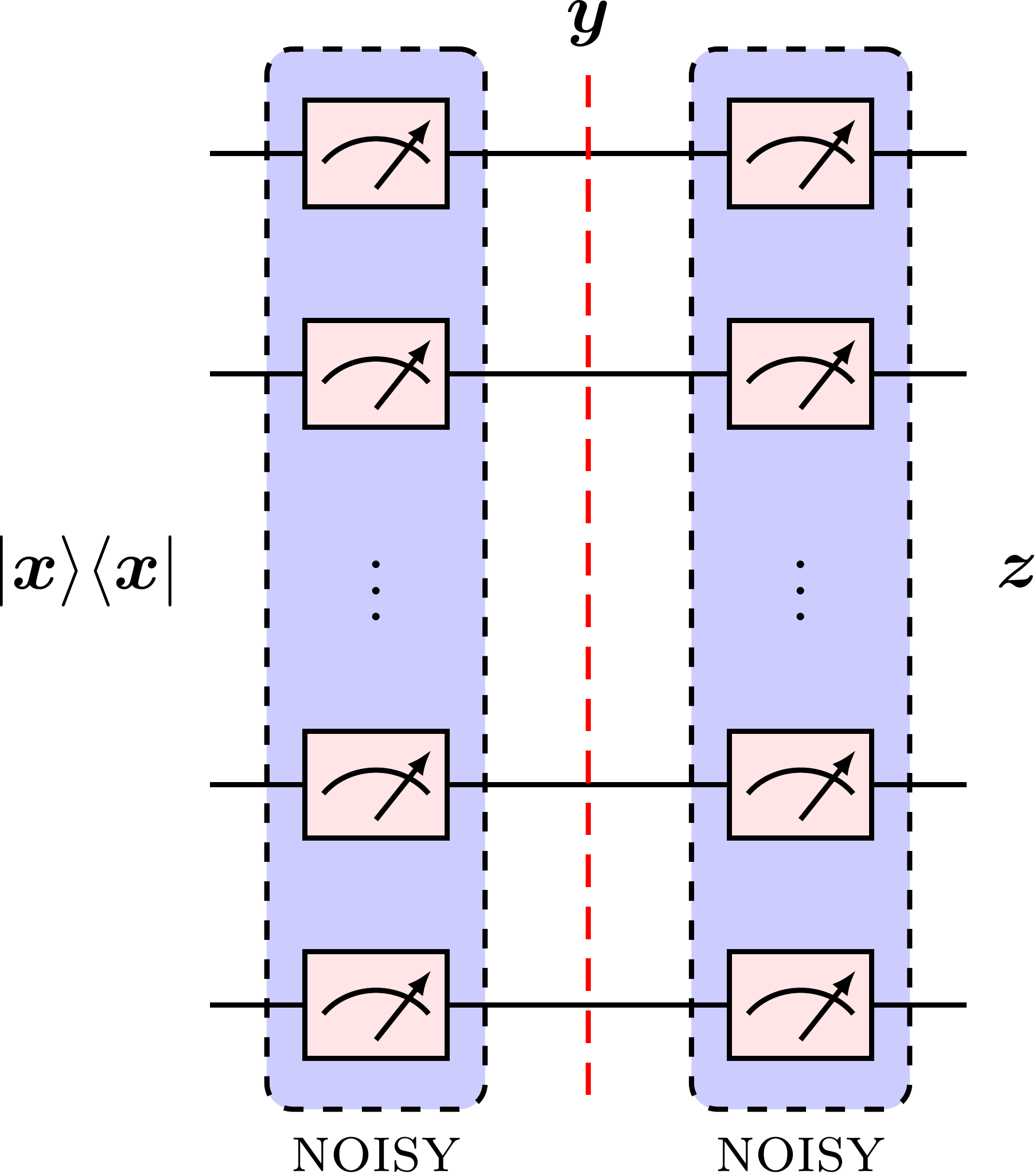}
  \caption{Apply the noisy quantum measurement device two times sequentially
        on the input state $\proj{\bm{x}}$ where $\bm{x}\in\{0,1\}^{n}$.
        The measurement outcome of the first measurement is $\bm{y}$ and
        the measurement outcome of the second measurement is $\bm{z}$.}
  \label{fig:sequential-measurement}
\end{figure}

Mathematically, quantum measurements can be modeled as
quantum-classical quantum channels~\cite[Chapter 4.6.6]{wilde2016quantum}
where they take a quantum system to a classical one.
Experimentally, the implementation of quantum measurement
is platform-dependent and has different characterizations.
For example, the fabrication and control of quantum coherent superconducting circuits
have enabled experiments that implement quantum measurement~\cite{naghiloo2019introduction}.
Based on the outcome data, experimental measurements are typically categorized
into two types: those only output classical outcomes
and those output both classical outcomes and quantum states.
That is, besides the usually classical outcome sequences,
the measurement device will also output a quantum state on the computational basis
corresponding to the classical outcome.
For the former type, we can implement the sequential measurement
via the \emph{qubit reset}~\cite{egger2018pulsed,magnard2018fast,yirka2020qubit} approach,
by which we mean the ability to re-initialize the qubits into a known state,
usually a state in the computational basis, during the course of the computation.
Technically, when the $i$-th noisy measurement outputs an outcome
sequence $\bm{s}^i\in\{0,1\}^n$, we use the qubit reset technique to prepare
the computational basis state $\vert\bm{s}^i\rangle\!\langle\bm{s}^i\vert$ and feed it to
the $(i+1)$-th noisy measurement (cf. Fig.~\ref{fig:sequential-measurement}).
In this case, the noisy measurement device can be reused.
For the latter type, the sequential measurement can be implemented efficiently:
when the $i$-th noisy measurement outputs a classical sequence and a quantum state
on the computational basis, we feed the quantum state to the $(i+1)$-th noisy measurement.

\section{Proof of Proposition~\ref{prop:good-estimate}}\label{appx:good-estimate}

\begin{proof}
By definition,
\begin{align}
\eta = \sum_{k=1}^{K+1} c_K(k-1) \eta^{(k)}
 = \frac{1}{M}\sum_{k=1}^{K+1}\sum_{m=1}^{M}c_K(k-1) O(\bm{s}^{m,k})
 = \frac{1}{M(K+1)}\sum_{k=1}^{K+1}\sum_{m=1}^{M}(K+1)c_K(k-1) O(\bm{s}^{m,k}).
\end{align}
Introducing the new random variables $X_{m,k}:= (K+1)c_K(k-1)O(\bm{s}^{m,k})$, we have
\begin{align}
\eta =  \frac{1}{M(K+1)}\sum_{k=1}^{K+1}\sum_{m=1}^{M} X_{m,k}.\label{eq:TgJvNAKG}
\end{align}
Intuitively, Eq.~\eqref{eq:TgJvNAKG} says that $\eta$ can be viewed as the empirical mean value
of the set of random variables
\begin{align}
    \left\{X_{m,k}: m=1,\cdots,M; k=1,\cdots,K+1\right\}.
\end{align}
First, we show that the absolute value of each $X_{m,k}$ is upper bounded as
\begin{align}\label{eq:mMZgBWZvH1}
  \vert X_{m,k}\vert = \vert (K+1)c_K(k-1)O(\bm{s}^{m,k}) \vert
  \leq (K+1) \vert c_K(k-1)\vert \vert O(\bm{s}^{m,k}) \vert
  \leq (K+1) \vert c_K(k-1)\vert,
\end{align}
where the second inequality follows from the assumption of $O$ (cf. Eq.~\eqref{eq:observable}).
Then, we show that $\eta$ is an unbiased estimator of the quantity $\sum_{k=1}^{K+1}c_K(k-1)E^{(k)}$:
\begin{subequations}\label{eq:mMZgBWZvH2}
\begin{align}
  \bE[\eta]
&= \bE\left[\frac{1}{M(K+1)}\sum_{k=1}^{K+1}\sum_{m=1}^{M} X_{m,k}\right] \\
&= \bE\left[\frac{1}{M}
   \sum_{k=1}^{K+1}\sum_{m=1}^{M}c_K(k-1)O(\bm{s}^{m,k})\right] \\
&= \sum_{k=1}^{K+1} c_K(k-1)\left(\sum_{\bm{x}}O(\bm{x})\langle \bm{x}\vert
      \bE_M\left[ \frac{1}{M} \sum_{m=1}^M \vert \bm{s}^{m,k}\rangle \right]\right) \\
&= \sum_{k=1}^{K+1} c_K(k-1)\left(
    \sum_{\bm{x}}O(\bm{x})\langle \bm{x}\vert A^k\opn{vec}(\rho)\right) \\
&= \sum_{k=1}^{K+1}c_K(k-1) E^{(k)},
\end{align}
\end{subequations}
where the last equality follows from~\eqref{eq:E_k}.
Eqs.~\eqref{eq:mMZgBWZvH1} and~\eqref{eq:mMZgBWZvH2} together guarantee that
the prerequisites of the Hoeffding's inequality hold. By the Hoeffding's equality, we have
\begin{align}
    \Pr\left\{\left\vert\eta - \sum_{k=1}^{K+1}c_K(k-1) E^{(k)} \right\vert \geq \varepsilon\right\}
&\leq 2\exp\left(- \frac{2M^2(K+1)^2\varepsilon^2}{ 4\sum_{k=1}^{K+1}\sum_{m=1}^M((K+1)c_K(k))^2}\right) \\
&= 2\exp\left(- \frac{2M^2(K+1)^2\varepsilon^2}{ 4M(K+1)^3\left(\sum_{k=0}^K [c_K(k)]^2\right)}\right) \\
&= 2\exp\left(- \frac{M\varepsilon^2}{ 2(K+1)\Delta}\right),
\end{align}
where $\Delta:=\sum_{k=1}^{K+1} [c_K(k)]^2 = \binom{2K+2}{K+1} - 1$.
Solving
\begin{align}
    2\exp\left(- \frac{M\varepsilon^2}{ 2(K+1)\Delta}\right) \leq \delta
\end{align}
gives
\begin{align}
    M \geq 2(K+1)\Delta\log(2/\delta)/\varepsilon^2.
\end{align}
To summarize, choosing $K =\left\lceil \log\varepsilon/\log\xi - 1\right\rceil$
and $M =\lceil 2(K+1)\Delta\log(2/\delta)/\varepsilon^2 \rceil$, we are able obtain the following
two statements
\begin{align}
    \Pr\left\{\left\vert\eta - \sum_{k=1}^{K+1}c_K(k-1) E^{(k)} \right\vert \geq \varepsilon\right\} \leq \delta, \\
    \left\vert \tr[O\rho] - \sum_{k=1}^{K+1}c_K(k-1) E^{(k)}\right\vert \leq \varepsilon,
\end{align}
where the first one is shown above and the second one is proved in Theorem~\ref{thm:approximation}.
Using the union bound and the triangle inequality, we conclude that $\eta$
can estimate the ideal expectation value $\tr[O\rho]$ with error $2\varepsilon$ at a probability
greater than $1-\delta$.
\end{proof}

\end{document}